\newcommand{\bydef}{\triangleq}
\def\bydef{:=}
\def\bb0{{\mathbb{0}}}
\def\bydef{:=}
\def\bb{{\mathbf{b}}}
\def\bh{{\mathbf{h}}}
\def\b0{{\mathbf{0}}}
\def\bB{{\mathbf{B}}}
\def\bbE{{\mathbb{E}}}
\def\bbN{{\mathbb{N}}}
\def\bbR{{\mathbb{R}}}
\def\bbZ{{\mathbb{Z}}}
\def\bydef{:=}
\def\sf0{{\mathsf{0}}}
\begin{document}

\newtheorem{thm}{Theorem}
\newtheorem{lemma}{Lemma}
\newtheorem{rem}{Remark}
\newtheorem{exm}{Example}
\newtheorem{prop}{Proposition}
\newtheorem{defn}{Definition}
\newtheorem{cor}{Corollary}
\def\proof{\noindent\hspace{0em}{\itshape Proof: }}
\def\endproof{\hspace*{\fill}~\QED\par\endtrivlist\unskip}
\def\bh{{\mathbf{h}}}
\def\SIR{{\mathsf{SIR}}}
\def\SINR{{\mathsf{SINR}}}

\title{Bounds on Minimum Number of Anchors for Iterative Localization and its Connections to Bootstrap Percolation}

\author{\authorblockN{Rahul Vaze}
\authorblockA{
Tata Institute of Fundamental Research\\
School of Technology and Computer Science\\
Homi Bhabha Road, Mumbai 400005\\
email: vaze@tcs.tifr.res.in}
\and
\authorblockN{Piyush Gupta}
\authorblockA{Bell Laboratories,  \\ 
Alcatel-Lucent, Murray Hill, \\
NJ 07974 USA \\ e-mail: pgupta@research.bell-labs.com}}

\date{}
\maketitle
\noindent
\begin{abstract} 
Iterated localization is considered where each node of a network needs to get localized (find its location on 2-D plane), when initially only a subset of nodes have their location information. The iterated localization process proceeds as follows. Starting with a subset of nodes that have their location information, possibly using global positioning system (GPS) devices, any other node gets localized if it has three or more localized nodes in its radio range. The newly localized nodes are included in the subset of nodes that have their location information for the next iteration. This process is allowed to continue, until no  new node can be localized. The problem is to find the minimum size of the initially localized subset to start with so that the whole network is localized with high probability. There are intimate connections between iterated localization and bootstrap percolation, that is well studied in statistical physics. Using results known in bootstrap percolation, we find a sufficient condition on the size of the initially localized subset that guarantees the localization of all nodes in the network with high probability.
\end{abstract}
\section{Introduction}
Several applications in wireless networks require that nodes know their location, e.g. vehicular or military networks. Localization is a technique to determine the physical coordinates of all nodes in the ad hoc network. These coordinates can be absolute, referenced through GPS, or relative, e.g. 
referenced through pair wise node distances. Absolute localization is typically achieved by having few anchor nodes who have their 
absolute location through some external means, e.g. GPS device. 
In the presence of anchor nodes, other nodes can acquire their location in the 2-D plane (or localize themselves) 
if there are three or more non-collinear anchor nodes in their radio range. The radio range is defined to be the maximum distance for which the received signal strength is above a required threshold. Most localization techniques are non-iterative in nature, where locations are obtained in one shot computation using the anchor nodes' locations. 
The non-iterative localization methods known in literature use received signal strength, hop count, time difference of arrival, semi-definite programming, multidimensional scaling etc. for finding node locations \cite{MooreLocalization2004, Biswas2004, Mao2006, BookStojmenovic}. 
The non-iterative methods do not exploit the fact that if any node that  has acquired its location information with sufficient accuracy it can aid in localizing few other nodes lying in its neighborhood that were not localized earlier. There has been extensive experimental work on iterative localization \cite{SavareseLoc2001}, anchor free distributed localization \cite{HariLoc2003}, mobile assisted localization \cite{HariLoc2005}, however, not many theoretical results are available. Necessary conditions have been derived in \cite{ErenLoc2004, AspnesLoc2006, Fang2009} for iteratively localizing an arbitrary network, together with a polynomial time algorithm.

In this paper we explore the iterative localization method for a network where $n$ nodes are randomly placed in a bounded area on a 2-D plane. To start with, a subset of nodes (called {\it anchors}) of size $m$ are given their location information. We assume that the $n$ nodes are uniformly distributed in the bounded area, and the anchors are chosen uniformly randomly from the $n$ nodes.
In each iteration, any node gets it location information (or gets {\it localized}) if it has three or more non co-linear localized nodes in its radio range $r(n)$. The newly localized nodes are included in the set of anchors for the next iteration. This process is allowed to continue, until no new node can be localized. The problem is to find the minimum number of the initial anchors  to start with so that the whole network is localized with high probability.

Iterated localization is similar to bootstrap percolation, which has been extensively  studied in statistical physics \cite{Lebowitz1989,Adler1991}. In bootstrap percolation, a 2-D $n \times n$ grid is considered, and each grid point is initially active with probability $p$ and not-active with probability $1-p$, independently of all other grid points. In each iteration, any grid point that is not active becomes active if it has two or more active neighbors out of its four nearest neighbors, and once a node becomes active it remains active forever. The  problem studied is to find the threshold on $p$ such that all grid points become active eventually with high probability.  For the 2-D case, sharp bounds on $p$ have been derived as a function of $n$ such that for $p \le p_c$, all grid points do not become active with high probability, while for $p > p_c$, all grid points are active eventually with high probability. The critical probability $p_c = \frac{\pi^2}{18 \ln n}$ \cite{HolroydBP2002}. Various generalizations of bootstrap percolation have also been studied, for example, on Erd{\H o}s-R{\'e}nyi graph \cite{JansonBP2010}, random geometric graph with a fixed radio range \cite{SanieeBP2012}, generalized bootstrap percolation \cite{BollobasBP2012} etc.

To use results from bootstrap percolation for solving the iterated localization problem, we connect the two problems as follows.
We map the iterated percolation problem to a suitable virtual grid, where each grid point becomes active if it has three or more active nodes out of the eight nearest neighbors. We then show that a sufficient condition for all the virtual grid points to become active eventually is identical to a sufficient condition for all the grid points for bootstrap percolation on 2-D to be active eventually.
We get the following sufficient condition on minimum $m$ (number of initial anchors) required for complete localization.

For radio range $r(n)$, and $\tau(n) =\Theta(r(n))$, such that $\lim_{n\rightarrow \infty }\left(1-e^{-n\pi \tau^2}\right)^{\frac{2}{r^2}} = 1$,   if \[ \left(1- \frac{\exp^{-m \pi r^2 } - \exp^{-n\pi r^2}}{1-\exp^{-n\pi r^2}}\right)> \frac{c}{\ln\left(\frac{\sqrt{2}}{r}\right)},\] where $c$ is a constant, then all nodes of the network are localized eventually with high probability. In particular, for $r(n) = \Theta \left(\sqrt{ \frac{{\ln n}}{n}}\right)$, $m = {\cal O}\left(n^{2.5/3}\right)$ (numerical result). 

\section {Notation:}
Let $S_1$ be a set and $S_2$ be a subset of $S_1$. Then $S_1 \backslash S_2$ denotes the set of elements of 
$S_1$ that do not belong to $S_2$. Cardinality of set $S$ is denoted by $|S|$. A disc of radius $r$ with center $x$ is denoted by $\bB(x,r) = \{y \in \bbR^2: |x-y|^2\le r\}$. 
Let $f(n)$ and $g(n)$ be two function defined on some subset of real numbers. 
Then we write $f(n) = \Omega(g(n))$ if
$\exists \ k > 0, \ n_0, \ \forall \ n>n_0$, $|g(n)| k \le |f(n)|$,
$f(n) = {\cal O}(g(n))$ if $\exists \ k > 0, \ n_0, \ \forall \ n>n_0$, $|f(n)| \le |g(n)| k$, and
$f(n) = \Theta(g(n))$ if $\exists \ k_1, \ k_2 > 0, \ n_0, \ \forall \ n>n_0$,
$|g(n)| k_1 \le |f(n)| \le |g(n)| k_2$. We use the symbol
$\bydef$  to define a variable.
 
\section{System Model}
Let $V$ be a set of $n$ nodes located in a unit ($1\times 1$) square in $\bbR^2$.
We assume that locations of nodes in $V$ are distributed as a  Poisson point process with density $n$ (approximating uniformly random node locations). Even though we consider a square, inherently we are assuming a toroidal or spherical surface, and we ignore the edge effects.
Let $A \subseteq V, |A|=m$ be the set of anchor nodes that have a GPS device, through which 
they exactly know their location on the 2-D plane. We assume that the set $A$ of $m$ anchor nodes is chosen uniformly randomly from the set $V$.
Each node of $V$ has a radio range of $r(n) <1$, i.e. each node can receive (transmit) transmissions  originating (send to) in a circle of radius $r(n)$ around it. For simplicity, we denote $r(n)$ as $r$. We assume that if a node has its own location information it can convey that information to all nodes within its radio range. 
Let at time $t$, $A_t \subseteq V$ be the set of the nodes that have their location information, $A_0 = A$.
We assume the following sequential location update rule. Using triangulation, at time $t+1$ any node $x \in V\backslash A_t$ can locate itself if there are $3$ or more nodes of $A_t$ in its radio range, i.e. $A_{t+1} = A_t \cup U_t$, where $U_t \bydef \{x \in V\backslash A_t : |\bB(x,r) \cap A_t| \ge 3 \}$. 
The problem is to find the minimum $m$ for which all $n$ nodes get located eventually with high probability, $$m^{\star}  =  \min_{\lim_{n\rightarrow \infty}P(\lim_{t\rightarrow \infty}A_{t} = n) =1} m.$$ 

There is an intimate connection between iterated localization considered in this paper and bootstrap percolation \cite{Lebowitz1989}. We briefly review the bootstrap percolation model and some results  that are useful for our analysis.

\subsection{Preliminaries on Bootstrap Percolation}
Consider a $n\times n$ regular grid $R$ on $\bbZ^2$ with side-length $1$.  Let at time step $t=0$ any grid point of $R$ be active (inactive) with probability $ p(n)$ $(1-p(n))$ independently of all other grid points. In subsequent time steps an inactive grid point becomes active if two or more of its four neighbors are active. Once a grid point becomes active it stays active forever. This model of sequential activation is called {\it bootstrap percolation}. The problem that has been extensively studied in bootstrap percolation is on finding the critical value of $p_c(n)$ such that all $n^2$ grid points are active eventually, i.e. 
for $ p(n)> p_c(n)$, $\lim_{n\rightarrow \infty}P(\text{all} \ n^2 \ \text{grid points are active eventually} )=1$, while for $ p(n) < p_c(n)$, $\lim_{n\rightarrow \infty}P(\text{all} \ n^2 \ \text{grid points are active eventually} ) <1$. In the following Lemma we summarize the result.

\begin{thm} \cite{Lebowitz1989} For bootstrap percolation on a  $n\times n$ regular grid $R$ on $\bbZ^2$, $p_c(n) = \Theta \left(\frac{1}{\ln n}\right)$.
\end{thm}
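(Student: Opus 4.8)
The plan is to establish the claimed order by proving a matching upper and lower bound on $p_c(n)$, both of order $1/\ln n$; combined with the standard fact that the probability of full activation is monotone in $p$ (by a coupling of the initial configurations), this pins down $p_c(n) = \Theta(1/\ln n)$. The engine for both directions is the elementary \emph{growth mechanism} of $2$-neighbor bootstrap percolation: if a solid (fully active) $a\times b$ rectangle of active sites is present, then the column immediately adjacent to one of its sides fills up completely if and only if that column contains at least one initially active site. Indeed, every cell of the adjacent column already has one active neighbor inside the rectangle, so a single active seed in the column triggers a vertical chain reaction that activates the whole column. Hence a solid rectangle grows across an adjacent column (row) of height $a$ exactly when that column is nonempty of active sites, an event of probability $1-(1-p)^a \approx 1 - e^{-pa}$.

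First I would prove the upper bound $p_c(n) = O(1/\ln n)$, i.e. that $p(n) > C/\ln n$ forces full activation w.h.p. for a suitable constant $C$. Partition the grid into $\Theta(n^2 p^2)$ disjoint blocks of side $\Theta(1/p)$ and call a block a \emph{nucleus} if its active sites eventually fill the entire block under the bootstrap rule restricted to that block. Using the growth mechanism, the probability that a fixed block becomes internally spanned is bounded below by the probability that an active core grows through all intermediate scales, which after taking logarithms reduces to estimating $\sum_{\ell\ge 1}\log\!\left(1-e^{-p\ell}\right)$. Comparing this sum to $\frac{1}{p}\int_0^{\infty}\log(1-e^{-x})\,dx = -\frac{\pi^2}{6p}$ shows that each block nucleates with probability at least $e^{-\Theta(1/p)}$. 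Since the blocks are disjoint, these nucleation events are independent, so the probability that none nucleates is at most $\exp\!\left(-\Theta(n^2 p^2)\,e^{-\Theta(1/p)}\right)$, which tends to $0$ precisely when $p \gg 1/\ln n$. A single nucleus then grows to engulf the whole grid because every row and column it must cross has length $O(n)$ and is empty only with probability $(1-p)^{O(n)} \approx e^{-\Theta(pn)}$; a union bound over the $O(n)$ lines shows all of them are nonempty w.h.p. once $p \gg 1/\ln n$, so growth is never obstructed.

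Next I would prove the lower bound $p_c(n) = \Omega(1/\ln n)$, i.e. that $p(n) < c/\ln n$ leaves some site inactive w.h.p. The key tool is the Aizenman--Lebowitz observation: if the whole $n\times n$ grid is internally spanned, then for every scale $k \le n$ there must exist an internally spanned rectangle whose longer side lies in $[k,2k]$. Choosing the critical scale $k = \Theta(1/p)$ and bounding the probability $I(R)$ that a fixed rectangle $R$ of these dimensions is internally spanned by the same integral estimate as above, $I(R) \le e^{-\Theta(1/p)}$, a union bound over the $O(n^2)$ positions and $O(1/p^2)$ admissible shapes gives a total probability of order $n^2 p^{-2}\, e^{-\Theta(1/p)}$ that such a critical spanning rectangle exists anywhere. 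This bound tends to $0$ when $p \ll 1/\ln n$, so with high probability no spanning rectangle of the critical scale exists and the grid cannot be fully activated.

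The main obstacle is the nucleation estimate $-\log I(R) = \Theta(1/p)$, which underlies both directions: one must sum the per-step growth probabilities across all scales and recognize the cumulative bottleneck governed by $\int_0^\infty \log(1-e^{-x})\,dx = -\pi^2/6$. Obtaining \emph{matching} constants in the exponent (the sharp $\pi^2/18$ of Holroyd) requires a delicate analysis of the extremal growth paths and the exact geometry of critical droplets. For the $\Theta(1/\ln n)$ statement claimed here, however, it suffices to obtain the crude two-sided bounds $c_1/p \le -\log I(R) \le c_2/p$, which follow from the monotonicity of the growth probabilities and the convergence of the relevant sum, so the full sharp machinery is not needed.
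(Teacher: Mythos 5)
Your two-sided architecture (nucleation plus droplet growth for the upper bound, the Aizenman--Lebowitz critical-scale rectangle lemma plus a union bound for the lower bound) is the standard route to this theorem, and it attempts strictly more than the paper does: the paper cites \cite{Lebowitz1989} for the $\Theta$ statement and only sketches the sufficiency direction via concentric squares around a fixed center (its Lemma 1 and Theorem 2). However, your upper bound has a genuine gap in the growth step, and it contradicts the growth mechanism you yourself set up. A solid $a\times b$ droplet advances across an adjacent column if and only if the adjacent column \emph{segment of height $a$} contains an initially active site; an active site elsewhere in the full grid column is not adjacent to the droplet and triggers no chain reaction. So the emptiness probability at each crossing is $(1-p)^{a}$ with $a$ the \emph{current} droplet side, not $(1-p)^{O(n)}\approx e^{-\Theta(pn)}$ as you claim. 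Since your nuclei have side only $\Theta(1/p)$, each early crossing fails with constant probability $\approx e^{-\Theta(1)}$, and the probability that a single such nucleus grows to fill the grid is $\prod_{k\ge \Theta(1/p)}\bigl(1-e^{-pk}\bigr)^{2}\le \exp\bigl(-\Theta(e^{-\Theta(1)}/p)\bigr)\to 0$: a nucleus of that size typically gets stuck. Your union bound over ``$O(n)$ lines of length $O(n)$'' is bounding the wrong events.

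The repair is standard: take blocks of side $\ell$ with $p\ell \gg \ln(1/p)$, e.g.\ $\ell=\Theta(\log^{2}n)$ when $p=\Theta(1/\log n)$. The nucleation probability per block is still at least $p\prod_{k=1}^{\infty}\bigl(1-e^{-pk}\bigr)^{2}=e^{-\Theta(1/p)}$ (the infinite product converges, by your integral estimate), there are still $n^{2}/\mathrm{polylog}\,n$ disjoint blocks, so for $p\ge C/\ln n$ with $C$ large the expected number of nuclei diverges and independence gives one w.h.p.; and now the growth failure probability from a side-$\ell$ droplet is at most $\sum_{k\ge \ell}2e^{-pk}\le 2e^{-p\ell}/p = O\bigl(n^{-C}\log n\bigr)\to 0$, so that nucleus does engulf the grid. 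Two secondary points. First, in the lower bound, $I(R)\le e^{-\Theta(1/p)}$ does not follow from ``the same integral estimate'' (that integral lower-bounds nucleation probabilities); it follows from the blocking observation that an internally spanned rectangle must contain an initially active site in every pair of adjacent lines perpendicular to its long side, giving $I(R)\le \bigl(1-(1-p)^{2a}\bigr)^{\lfloor b/2\rfloor}\le e^{-\Theta(1/p)}$ at the critical scale $b=\Theta(1/p)$; with that, your union bound is correct. Second, once repaired, your argument is more complete than the paper's own treatment, whose Theorem 2 proof asserts that all faces of all concentric squares around the grid center are occupied w.h.p.---an event whose probability is in fact polynomially small in $p$ because of the innermost squares, and which really requires the union over nucleation centers that your block decomposition provides.
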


Next, we present a sufficient condition derived in \cite{Lebowitz1989}  used to show the sufficiency of $p_c(n) = {\cal O}\left(\frac{1}{\ln n}\right)$ for bootstrap percolation.

\begin{lemma}\label{lem:suff}If initially at time $t=0$, each of the faces of the square with side $(2k+1), \ k=0,1,\dots, n/2$, have at least one active node,
then all the $n^2$ vertices of the grid are active eventually.
\end{lemma}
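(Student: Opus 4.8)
The plan is to argue by induction on the nested concentric squares $S_k$ of side $(2k+1)$, $k = 0, 1, \dots, n/2$, all sharing the grid's center. The inductive hypothesis is that $S_{k-1}$ (the square of side $2k-1$) eventually becomes fully active. The base case $k=0$ is immediate, since $S_0$ is the single central vertex, which is active by assumption. The inductive step will show that, given $S_{k-1}$ fully active together with one active vertex on each of the four faces of $S_k$, the entire square $S_k$ becomes active; since the squares $S_k$ exhaust the grid as $k$ ranges up to $n/2$, this yields the claim. Because bootstrap percolation is monotone (active vertices stay active, and the update rule only adds vertices), it suffices to exhibit \emph{some} ordering of activations reaching full occupancy of $S_k$: the true dynamics can only activate vertices sooner.

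For the growth step I would place the grid center at the origin, so that $S_{k-1} = \{-(k-1),\dots,k-1\}^2$ and $S_k = \{-k,\dots,k\}^2$, and the frame $S_k \setminus S_{k-1}$ splits into four sides (the rows $y = \pm k$ and the columns $x = \pm k$), whose only shared vertices are the four corners $(\pm k, \pm k)$. Consider the top side, the row $y = k$, which by hypothesis contains an active vertex, say at $(x_0, k)$. Each interior vertex $(x, k)$ with $|x| \le k-1$ already has its lower neighbor $(x, k-1) \in S_{k-1}$ active; hence as soon as a horizontal neighbor on the same row becomes active, such a vertex acquires its second active neighbor and activates. Starting from the seed $(x_0, k)$ this sets off a domino chain that sweeps left and right and fills every interior vertex of the row, and the same conclusion holds even if the seed sits at a corner. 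Applying the identical reasoning to the bottom row and to the two columns $x = \pm k$ fills the interiors of all four sides.

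It then remains to activate the four corners. Taking $(k,k)$, its two neighbors $(k-1, k)$ and $(k, k-1)$ lie in the already-filled interiors of the top row and the right column respectively, so $(k,k)$ has two active neighbors and activates; the other three corners follow identically. Thus $S_k$ is fully active, completing the induction.

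The main point requiring care is the domino chain. One must verify that the filled inner side $y = k-1$ effectively lowers the activation threshold along the new side from two to one, so that a single seed propagates across the entire side, and that this propagation is uniform whether the seed lies in the interior or at a corner (where the chain must still close up rather than stall). Once this growth-of-a-filled-square mechanism is established, the corner step and the induction over $k$ are routine.
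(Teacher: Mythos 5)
Your proposal is correct and takes essentially the same approach as the paper: induction over nested concentric squares, where the fully active inner square $S_{k-1}$ plus one seed per face of $S_k$ fills each face by a domino chain, after which the corners activate. The only difference is that you spell out the details the paper compresses into ``clearly'' (the chain mechanism, the corner-seed case, and the appeal to monotonicity), which makes your write-up more careful but not a different argument.
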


\begin{proof}
Consider the smallest square $S_1$ of side $1$ at the center of the grid. As long as there are two active grid points in $S_1$, all four nodes of $S_1$ are active. Now, assume that each grid point of square $S_{\ell}, \ell \ge 1$ (square with side $\ell$ around the center) is active at time $t$. Then if there is at least one active grid points on each of the faces of $S_{\ell+1}$, then clearly, all grid points of $S_{\ell+1}$ become active in next time step $t+1$.
\end{proof}
\begin{thm}\label{thm:bp}\cite{Lebowitz1989} For bootstrap percolation on a  $n\times n$ regular grid $R$ on $\bbZ^2$, $p_c(n) ={\cal O}\left(\frac{1}{\ln n}\right)$.
\end{thm}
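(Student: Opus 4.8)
The plan is to use Lemma~\ref{lem:suff} as the growth engine and to supply it with an initial configuration that occurs with high probability when $p(n)=c/\ln n$ for a sufficiently large constant $c$. A first observation is that applying Lemma~\ref{lem:suff} with the concentric squares centred at the grid centre cannot by itself give a high-probability statement: the innermost square contains only a bounded number of grid points, so the probability that each of its faces carries an active node is $\Theta(p^2)\to 0$. The argument must therefore \emph{search for a seed} among many candidate centres, and the main obstacle is to do so while keeping enough independence to run a simple first-moment bound.

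To this end I would tile the torus into $\Theta(n^{2-2\varepsilon})$ disjoint blocks of side $b=n^{\varepsilon}$, for a small constant $\varepsilon=\varepsilon(c)>0$. Inside each block I place the concentric squares of Lemma~\ref{lem:suff}, centred at the block centre and extending out to side $b$; crucially, every site used by this structure lies inside the block, so the events $G$ = ``the block becomes fully active'' are \emph{independent} across blocks. The event $G$ factorises over the central seed and the successive rings, giving
\[
P(G)\;\ge\;P(\text{centre is seeded})\,\prod_{k=1}^{b/2}\bigl(1-(1-p)^{2k}\bigr)^{4},
\]
where the first factor is $\Theta(p^2)$ and the $k$-th ring has four faces of length $\approx 2k$.

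The heart of the estimate is the exponent $\sum_{k\ge1}\bigl(-\ln(1-(1-p)^{2k})\bigr)$. Expanding the logarithm and summing the resulting geometric series yields $\sum_{m\ge1}\frac1m\frac{(1-p)^{2m}}{1-(1-p)^{2m}}=\Theta(1/p)$, and this is precisely where the logarithmic threshold appears, since $\sum_k(1-p)^{2k}=\Theta(1/p)=\Theta(\ln n/c)$ and \emph{not} larger. Consequently $P(G)\ge n^{-\alpha(c)}$ with $\alpha(c)=\Theta(1/c)\to 0$ as $c\to\infty$ (the $\Theta(p^2)$ seed factor only perturbs the exponent by $o(1)$). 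Choosing $c$ large so that $\alpha(c)<2-2\varepsilon$, the probability that \emph{no} block becomes fully active is at most $(1-n^{-\alpha})^{\Theta(n^{2-2\varepsilon})}\le\exp(-n^{\,2-2\varepsilon-\alpha})\to 0$, so with high probability at least one block is entirely active. (This recovers the correct order $\cO(1/\ln n)$, though a loose constant compared with the sharp value.)

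Finally I would let this fully active block play the role of the central square in Lemma~\ref{lem:suff} at the global scale and grow it across the whole torus. Here the faces still to be filled all have length at least $b=n^{\varepsilon}\gg\ln n$, so a union bound over the $\cO(n^2)$ segments of length $b$ shows that with high probability every such segment carries an active node, since $n^2(1-p)^{b}\le\exp(2\ln n-p\,n^{\varepsilon})\to 0$. On this event the active block expands to cover everything, and combining the two high-probability events yields complete activation, i.e.\ $p_c(n)=\cO(1/\ln n)$. The disjoint-block device is exactly what resolves the seeding/independence tension flagged above: it supplies $\Theta(n^{2-2\varepsilon})$ near-independent chances for a nucleus while keeping each block's success probability only polynomially small in $n$.
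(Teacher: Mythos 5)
Your proposal is correct, and it takes a genuinely different (and more careful) route than the paper. The paper's own proof of Theorem~\ref{thm:bp} is a two-line appeal to Lemma~\ref{lem:suff}: it asserts that when $p(n)=\Theta(1/\ln n)$, \emph{every} face of \emph{every} concentric square centred at the grid centre contains an active site with high probability, and then invokes the lemma. As you observed at the outset, that assertion is false as stated: the innermost square has faces of only two sites each, so the probability that all four of its faces are hit is $\Theta(p^2)\to 0$; indeed the probability of the full nested-square event at a fixed centre is $\exp(-\Theta(1/p))$, i.e.\ polynomially small in $n$, not $1-o(1)$. Your argument supplies exactly the missing ingredient, and it is the standard nucleation-plus-growth proof of Aizenman--Lebowitz type that \cite{Lebowitz1989} actually relies on: tile the torus into $\Theta(n^{2-2\varepsilon})$ disjoint blocks of side $n^{\varepsilon}$ so that the internal-spanning events are independent, lower-bound each block's success probability by $\Theta(p^2)\exp(-\Theta(1/p))\ge n^{-\alpha(c)}$ with $\alpha(c)=\Theta(1/c)$ (your geometric-series computation $\sum_{m\ge 1}\frac{1}{m}\frac{(1-p)^{2m}}{1-(1-p)^{2m}}=\Theta(1/p)$ is the right way to see this), take $c$ large so that $\alpha(c)<2-2\varepsilon$ to guarantee a seed block with high probability, and then grow that block globally using a union bound over the $\cO(n^2)$ segments of length $n^{\varepsilon}$, each of which is hit with overwhelming probability since $p\,n^{\varepsilon}\gg \ln n$. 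What the paper's approach buys is brevity — it delegates everything to Lemma~\ref{lem:suff} and to the cited literature — at the cost of a step that does not hold as written; what your approach buys is an actual proof: the two-scale decomposition resolves the tension between needing a rare seed and needing a high-probability conclusion, which is precisely the point the paper's one-liner glosses over. Two cosmetic caveats: the factorization over the four faces of a ring should be justified (the faces share corners), either by restricting to four disjoint arcs of the ring's perimeter or by invoking the Harris/FKG inequality for increasing events, and the seed event for the innermost square should specify two \emph{diagonally} opposite active corners, since an adjacent pair does not activate the unit square under the two-of-four rule.
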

\begin{proof} If $p(n)= {\cal O}\left(\frac{1}{\ln n}\right)$, then each of the faces of the square with side $(2k+1), \ k=0,1,\dots, n/2$, have at least one active node with high probability. Then using Lemma \ref{lem:suff} we get the required result.
\end{proof}
We will use Theorem \ref{thm:bp} for finding a sufficient condition for localizing all nodes eventually with iterated localization.


\subsection{Upper Bound on $m^{\star}$} We construct a virtual grid $G$ from  the $1\times 1$ square with side length $\frac{r}{\sqrt{2}}$ as shown in Fig. \ref{fig:lattice}. For each grid point $g_{ij} \bydef (i,j)$, 
consider a ball of radius $\tau = \tau(n)$ around $g_{ij}$, $\bB(g_{ij},\tau)$, where $\tau < \frac{r}{2\sqrt{2}}$. 
We define $E_{g_{ij}}=1$ if $|\bB(g_{ij},\tau) \cap V|\ge 1$, i.e. if there is a node of $V$ within a ball of radius $\tau$ from $g_{i,j}$, and $E_{g_{ij}}=0$ otherwise. 
Since $\tau < \frac{r}{2\sqrt{2}}$, events $E_{g_{ij}}=1$ are independent  $\forall \ i,j$. The probability that for each of the grid points $E_{g_{ij}}=1, \ \forall \ i,j$, is 
$P(\cap_{ \forall  i,j} E_{g_{ij}}=1) = \left(1-e^{-n\pi \tau^2}\right)^{\frac{2}{r^2}}$. 
Let $r$ and $\tau$ be such that $n \rightarrow \infty$, $P(\cap_{ \forall  i,j} E_{g_{ij}}=1) \rightarrow 1$, i.e. $E_{g_{ij}}=1, \ \forall \ i,j$ with high probability.

\begin{figure}
\centering
\includegraphics[width=3.4in]{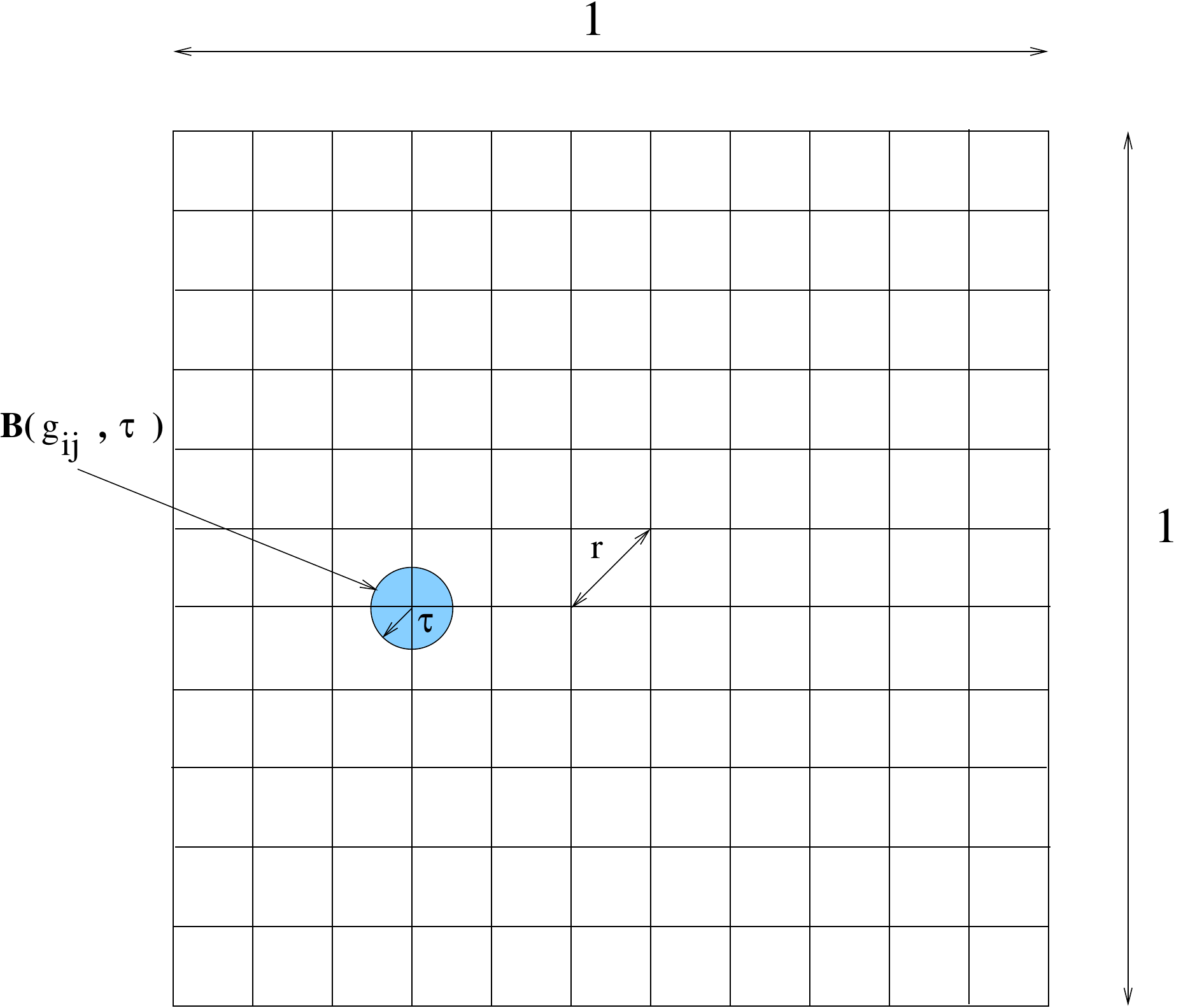}
\caption{Virtual Grid}
\label{fig:lattice}
\end{figure}




%

%
Next, we will map iterated localization to bootstrap percolation on the virtual grid $G$ as follows to use the results known in bootstrap percolation, Theorem \ref{thm:bp}. 
Recall that we have chosen $r$ and $\tau$ such that for large enough $n$, $E_{g_{ij}}=1, \ \forall \ i,j$ with high probability, i.e. with high probability there is at least one node in $\bB(g_{ij},\tau), \ \forall \ g_{i,j}$.  
For the rest of the discussion  we condition on the event that $E_{g_{ij}}=1, \ \forall \  i,j$.
We define a virtual vertex $g_{ij}$ of the virtual grid $G$ to be {\it red} if there is at least one GPS node (out of total $m$) in $\bB(g_{ij},\tau)$, otherwise its {\it blue}. Since $\tau < \frac{r}{2\sqrt{2}}$, this coloring is independent for all vertices. 
Let $q \bydef P(g_{ij} \ \text{is initially red}), \ \forall \ i,j$. 
We define the virtual grid bootstrap percolation where  any blue virtual vertex $g_{ij}$ becomes red if it has  three or more virtual red neighbors out of its $8$ closest neighbors on the virtual grid. The updation process is allowed to continue, where once a virtual vertex becomes red it remains red forever. 
This is a different update condition  compared to the one defined earlier for bootstrap percolation, where only two of the four nearest neighbors were required for activation.


Consider a virtual vertex $g_{ij}$ that is blue initially. At any time step $t$, let $g_{ij}$ become red in the virtual grid bootstrap percolation, then $z \in V\cap \bB(g_{ij},\tau)$ can find three or more localized nodes in $\bB(z,r+2\tau)$. That is, if the radio range for iterated localization is enhanced to $r'=r+2\tau$,  $g_{ij}$ becoming red implies that all nodes belonging to $V\cap \bB(g_{ij},\tau)$ become localized. Moreover, if all virtual grid vertices are red eventually, then clearly for all nodes $v \in V$, $\bB(v,r+2\tau)$ contains at least three localized nodes if radio range is $r'$. Thus, if the radio range is $r+2\tau$, then the whole network $V$ of $n$ nodes is localized if the whole virtual grid is colored red eventually. 




%


Next, we  show that the sufficient condition for having each vertex to be red eventually in virtual grid bootstrap percolation (requiring three red neighbors out of eight) is same as that of bootstrap percolation (requiring two red neighbors out of four). 

\begin{lemma}\label{lem: sufcondlocalization} For bootstrap percolation on the virtual grid, where at least three out of eight immediate neighbors are required to be red for a vertex to be colored red, if initially each of the faces of the square with side $(2k+1)\frac{r}{\sqrt{2}}, \ k=0,1,\dots, \frac{\sqrt{2}}{r}$, have at least one red vertex, then all $\frac{2}{r^2}$ vertices of the grid are red eventually.
\end{lemma}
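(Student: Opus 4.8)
The plan is to mirror the induction used in the proof of Lemma~\ref{lem:suff}, growing a solid red region outward from the center of the virtual grid one concentric square at a time, and to re-verify that the neighbor counts demanded by the \emph{three-out-of-eight} rule are met at every step. Writing $S_\ell$ for the centrally located square of side $\ell\frac{r}{\sqrt 2}$ and identifying it with the set of grid points it encloses, I will show by induction on $\ell$ that if each face of every odd square $S_{2k+1}$ carries at least one initially red vertex, then $S_\ell$ is entirely red after finitely many update steps; pushing $\ell$ up to $\frac{\sqrt 2}{r}$ then colors all $\frac{2}{r^2}$ vertices red.

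The heart of the argument is the inductive step. Assume $S_\ell$ (with $\ell\ge 2$) is entirely red and consider a vertex $v$ of the next ring lying directly outside a face of $S_\ell$. The vertex of $S_\ell$ immediately adjacent to $v$ and the two vertices of $S_\ell$ diagonally adjacent to $v$ are all red, so $v$ already has three red neighbors and turns red in one step. This is precisely where the Moore (eight-neighbor) geometry helps: the two diagonal neighbors supply the extra count that the threshold of three demands, so the entire interior of each face of the new ring reddens simultaneously, \emph{without} even needing the face seed. Once the two faces meeting at a corner of the new ring have reddened, the corner vertex sees three red neighbors---one in $S_\ell$ and one on each completed adjacent face---and hence also turns red, so the full ring $S_{\ell+1}$ becomes red and the induction advances.

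The delicate points, which I expect to be the main obstacle, are the two ends of each face, the corners, and above all the base case, since it is exactly there that the solid block of $S_\ell$ fails to furnish three consecutive red neighbors and the threshold of three (rather than two) bites. For an end-of-face vertex only two of its neighbors lie in $S_\ell$; I will rescue it by a one-step-delayed chain reaction, noting that its neighbor one step inward along the new ring has already turned red in the first step and supplies the third red neighbor, invoking the per-face seed guaranteed by the hypothesis in the one situation (the innermost ring) where this inward neighbor is itself absent. The base case is the genuinely subtle part: because a bare pair of red seeds need not propagate under the three-neighbor rule, I must check that the seeds forced on the faces of the smallest squares $S_1$ and $S_3$ nucleate a solid $3\times 3$ block, after which the free growth above takes over. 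Granting this, the hypothesis of Lemma~\ref{lem:suff}---one red vertex per face of each odd concentric square---is sufficient here as well, so the probabilistic estimate behind Theorem~\ref{thm:bp} carries over unchanged.
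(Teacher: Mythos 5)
Your inductive step is sound, and in fact slightly cleaner than the paper's: you observe that once the solid square $S_\ell$ is red, every interior vertex on a face of the next ring already sees three red neighbors inside $S_\ell$ (its inward orthogonal neighbor plus the two inward diagonals), so the ring fills in even without the face seed; the end-of-face vertices then get their third red neighbor from the ring vertex one step inward, and the corners follow once the two incident faces are complete. The paper runs the same induction but starts the chain at the face seed; both versions of the inductive step work. The genuine gap is exactly the point you flagged and then deferred: the base case. You say you ``must check that the seeds forced on the faces of $S_1$ and $S_3$ nucleate a solid $3\times 3$ block'' and then proceed ``granting this.'' That claim is not merely an unfinished verification---it is false, and so is the lemma read literally. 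Take the center of the concentric squares to be a cell center, so that the corners of $S_{2k+1}$ are $(-k,-k)$, $(k+1,-k)$, $(-k,k+1)$, $(k+1,k+1)$, and color red precisely the diagonal points $\ldots,(-1,-1),(0,0),(1,1),(2,2),\ldots$ . For every $k$ the two opposite corners $(-k,-k)$ and $(k+1,k+1)$ are red and between them touch all four faces of $S_{2k+1}$, so the hypothesis of the lemma holds for every square, including $S_1$ and $S_3$. Yet this configuration is frozen under the three-of-eight rule: a vertex adjacent to the diagonal, say $(x,x+1)$, has exactly two red neighbors, namely $(x,x)$ and $(x+1,x+1)$, and vertices farther from the diagonal have fewer, so no vertex ever turns red and the grid never fills.

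Consequently no amount of case-checking can close your base case as stated: the seeds forced by the hypothesis need not nucleate any $3\times 3$ block at all. The hypothesis would have to be strengthened (for instance, so that a single corner cannot serve as the seed of two faces, or so that enough seeds sit near the center to genuinely create a $3\times 3$ red block), or replaced by a probabilistic argument showing that under the independent seeding used in Theorem \ref{thm:il} such degenerate placements occur with vanishing probability. For what it is worth, the paper's own proof hides the same hole behind the sentence ``one can see that each blue vertex of $S_1\cup S_3$ eventually finds three or more red neighboring vertices,'' which the diagonal configuration above refutes; so your instinct that the base case is where the threshold of three ``bites'' was exactly right, and the difference from Lemma \ref{lem:suff} is precisely that a pair of diagonally placed seeds spreads under the two-of-four rule but is inert under the three-of-eight rule, which is why corner seeds suffice there but not here.
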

\begin{proof} Consider both $S_1$ and $S_3$, where $S_{\ell}$ is the square of side 
$\left(\frac{\ell  r}{\sqrt{2}}\right),  \ell = 2k+1, k\in \bbN$  as shown in Fig. \ref{fig:suff3on8}, where there is at most one red vertex in any of the faces of $S_1$ or $S_3$. For any configuration that has at least one red vertex on each face of $S_1$ and $S_3$, one can see that each blue vertex of $S_1\cup S_3$ eventually finds three or more red neighboring vertices and hence becomes red. Hence, eventually, vertices of $S_1$ and $S_3$ will become red if there is at least one red vertex on each face of $S_1$ and $S_3$.
 
Now we will prove the Lemma by using induction. Assume that all vertices of $S_{\ell}$ are red, and each face of $S_{\ell+1}$ has at least one red vertex. We know that all vertices of $S_{3}$ are red, hence induction can start. We will show that under this condition, all vertices of $S_{\ell+1}$ eventually become red.
Consider any blue vertex of $S_{\ell+1}$ that has a red vertex of $S_{\ell+1}$ as its neighbor. In particular consider blue vertex $v_1$ in Fig. \ref{fig:suff3on8expanded}. Clearly, $v_1$ has three red neighbors and hence it will become red in the next step. Continuing this process, all non-corner vertices of $S_{\ell+1}$ that lie on the same face as $v_1$  will find 
at least three red neighbors and will eventually become red themselves. Since each face of $S_{\ell+1}$ has at least one red vertex, all non-corner vertices of $S_{\ell+1}$ will eventually become red. Once all the non-corner vertices of $S_{\ell+1}$ are red, then even the corner vertices have three red neighbors, and consequently all vertices of $S_{\ell+1}$ will eventually become red. 
\end{proof}
\begin{figure}
\centering
\includegraphics[width=1.4in]{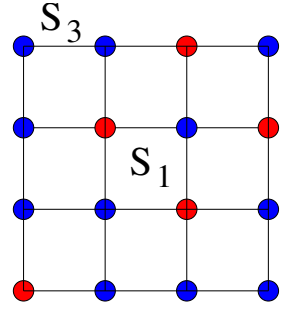}
\caption{Sufficient condition for making all grid points red. }
\label{fig:suff3on8}
\end{figure}
\begin{figure}
\centering
\includegraphics[width=1.4in]{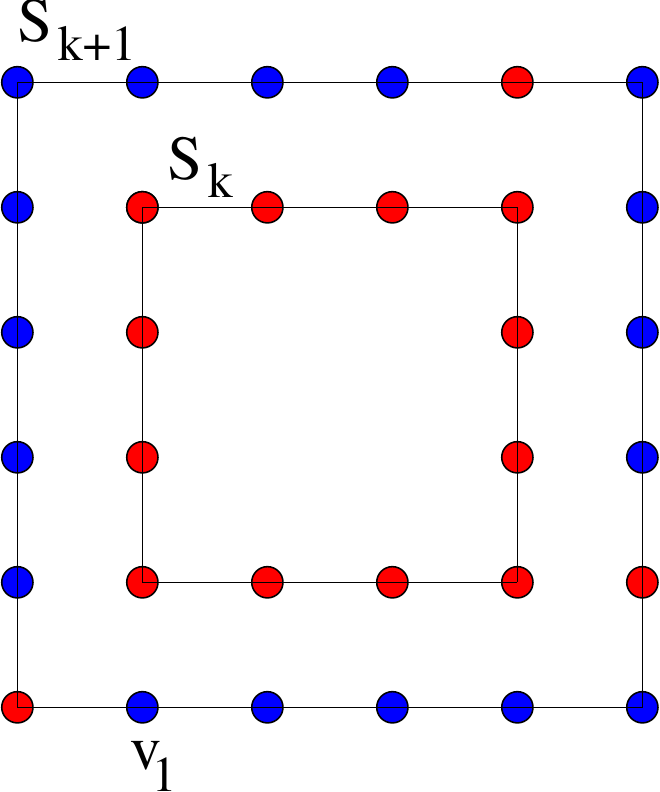}
\caption{Sufficient condition for making all grid points red. }
\label{fig:suff3on8expanded}
\end{figure}

Thus using Theorem \ref{thm:bp}, we get the following Theorem for the iterated localization.
\begin{thm}\label{thm:il}  
If the radio range is $r' = r+2\tau$, where $r$ and $\tau$ is such that 
$\lim_{n\rightarrow \infty}P(\cap_{ \forall  i,j} E_{g_{ij}}=1) = \lim_{n\rightarrow \infty}\left(1-e^{-n\pi \tau^2}\right)^{\frac{2}{r^2}}=1$, and 
$q = {\cal O}\left( \frac{1}{\ln(\frac{\sqrt{2}}{r})}\right)$, then all nodes of $V$ are localized eventually with high probability. 
\end{thm}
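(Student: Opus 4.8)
The plan is to assemble the components already established: condition on the high-probability event that every ball around a virtual grid point is occupied, use the geometric implication proved just before the theorem to reduce iterated localization to the virtual-grid three-out-of-eight bootstrap percolation, and then feed Lemma~\ref{lem: sufcondlocalization} with the face-occupancy estimate coming from the proof of Theorem~\ref{thm:bp}.

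First I would restrict to the event $\cA \bydef \{E_{g_{ij}}=1\ \forall\, i,j\}$, which by hypothesis has $P(\cA)\to 1$. On $\cA$ every ball $\bB(g_{ij},\tau)$ contains a node of $V$, so the red/blue labelling is well defined; and because $\tau<\frac{r}{2\sqrt2}$ makes the balls $\{\bB(g_{ij},\tau)\}$ pairwise disjoint, each label depends only on the configuration inside a single ball. Hence the labels are mutually independent, each vertex being red with the conditional probability $q$, and this independence survives the conditioning on $\cA$ since $\cA$ also factorizes across the disjoint balls. The virtual grid is therefore an $n'\times n'$ array of independent Bernoulli$(q)$ sites with $n'=\frac{\sqrt2}{r}$ (so $\frac{2}{r^2}$ vertices in all), evolving under the three-out-of-eight update.

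Next I would invoke the geometric implication established just before the theorem: if every virtual vertex is eventually red, then for radio range $r'=r+2\tau$ every node $v\in V$ has at least three localized nodes within $\bB(v,r+2\tau)$ and is therefore localized. Thus it suffices to show that, under the stated condition on $q$, the virtual grid reddens completely with high probability. For this I would reuse the face-occupancy computation from the proof of Theorem~\ref{thm:bp}: on an $N\times N$ grid of independent sites of occupation probability of order $1/\ln N$, each face of every nested odd-sided square carries at least one occupied site with high probability. Applying it with $N=n'$ and site probability $q$ shows that, with high probability, each face of every square of side $(2k+1)\frac{r}{\sqrt2}$ has at least one red vertex, which is precisely the hypothesis of Lemma~\ref{lem: sufcondlocalization}. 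That lemma then forces all $\frac{2}{r^2}$ vertices to become red eventually under the three-out-of-eight rule. Intersecting the ``all red'' event with $\cA$ and using $P(\cA)\to 1$ yields that all $n$ nodes are localized eventually with high probability.

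The conceptual obstacle, already disposed of by the preceding lemmas, is that the virtual rule (three red out of eight neighbours) differs from standard bootstrap percolation (two active out of four), so a priori the threshold could change; Lemma~\ref{lem: sufcondlocalization} resolves this by showing the \emph{same} nested-square face-occupancy configuration still drives the grid to all-red, so the $\Theta(1/\ln n')=\Theta(1/\ln(\sqrt2/r))$ threshold of Theorem~\ref{thm:bp} transfers unchanged and fixes the constant implicit in $q=\cO(\frac{1}{\ln(\sqrt2/r)})$. What remains is bookkeeping: verifying $n'=\sqrt2/r$, checking that conditioning on $\cA$ preserves the i.i.d.\ site structure, and noting that the three localizing nodes, coming from three distinct neighbouring balls, are non-collinear almost surely for a Poisson field, so that triangulation is valid.
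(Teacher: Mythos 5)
Your proposal is correct and follows essentially the same route as the paper: the paper's own proof is just the one-line observation that the sufficient conditions of Lemma~\ref{lem:suff} and Lemma~\ref{lem: sufcondlocalization} coincide, so the face-occupancy estimate behind Theorem~\ref{thm:bp} (with $N = \sqrt{2}/r$ and site probability $q$) drives the three-out-of-eight virtual grid to all-red, which via the geometric mapping localizes all of $V$ at range $r+2\tau$. Your version merely spells out the steps the paper leaves implicit (conditioning on $\cA$, independence of the Bernoulli labels across disjoint balls, and almost-sure non-collinearity for triangulation), which is a faithful expansion rather than a different argument.
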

\begin{proof} Since the sufficient conditions in Lemma \ref{lem:suff} and Lemma \ref{lem: sufcondlocalization} are identical, the result follows from Theorem \ref{thm:bp}.
\end{proof}

Therefore, initially, if the probability of any vertex being red (having a GPS node) $q$ is more than 
 $\frac{c}{\ln(\frac{\sqrt{2}}{r})}$, where $c$ is a constant, then with high probability all the vertices of the grid are red with iterated localization. Using Theorem \ref{thm:il}, in the next corollary we find the sufficient number of anchor nodes $m$ to start with so that all nodes are localized eventually with high probability.


\begin{cor}\label{cor:scaled} For any $r(n)$ and $\tau(n) =\Theta(r(n))$, such that $\lim_{n\rightarrow \infty }P(\cap_{ \forall  i,j} E_{g_{ij}}=1) = \lim_{n\rightarrow \infty }\left(1-e^{-n\pi \tau^2}\right)^{\frac{2}{r^2}} = 1$,  then if $$ \left(1- \frac{\exp^{-m \pi r^2 } - \exp^{-n\pi r^2}}{1-\exp^{-n\pi r^2}}\right)> \frac{c}{\ln(\frac{\sqrt{2}}{r})}$$
 then all nodes of $V$ are localized eventually with high probability.
\end{cor}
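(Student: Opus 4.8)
The plan is to reduce the Corollary to Theorem~\ref{thm:il} by computing the initial red-probability $q$ explicitly as a function of $m$, and then recognizing that the displayed hypothesis is exactly the condition $q > c/\ln(\frac{\sqrt2}{r})$ of Theorem~\ref{thm:il}, up to constants that $\tau=\Theta(r)$ lets us absorb into $c$. First I would fix a single virtual vertex $g_{ij}$ and evaluate $q = P(g_{ij}\ \text{is red}\mid E_{g_{ij}}=1)$. Let $K \bydef |\bB(g_{ij},\tau)\cap V|$; since the nodes form a Poisson process of intensity $n$, $K$ is Poisson with mean $n\pi\tau^2$. The $m$ anchors are a uniform random $m$-subset of $V$, so each node lies in $A$ with marginal probability $m/n$, which for large $n$ is well approximated by independent $m/n$-thinning; under this thinning each of the $K$ nodes in the ball is an anchor independently with probability $m/n$. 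Conditioning on $K=k$, the ball contains no anchor with probability $(1-m/n)^k$, so I would write $q = 1 - E[(1-m/n)^K\mid K\ge 1]$. Using the Poisson probability generating function, $E[(1-m/n)^K] = e^{-(m/n)\,n\pi\tau^2} = e^{-m\pi\tau^2}$, and stripping the $K=0$ atom gives
\[
q \;=\; 1 - \frac{e^{-m\pi\tau^2}-e^{-n\pi\tau^2}}{1-e^{-n\pi\tau^2}}.
\]

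Next I would invoke Theorem~\ref{thm:il}. Conditioned on $E_{g_{ij}}=1$ for all $i,j$ — which holds with high probability by the standing assumption that $\lim_{n\to\infty}(1-e^{-n\pi\tau^2})^{2/r^2}=1$ — the initial red colorings are independent across vertices, because the balls $\bB(g_{ij},\tau)$ are pairwise disjoint (adjacent grid points are at distance $\frac{r}{\sqrt2}$ while $2\tau < \frac{r}{\sqrt2}$). Hence the virtual grid really is an instance of the initial-activation model to which Theorem~\ref{thm:il} applies, and $q > c/\ln(\frac{\sqrt2}{r})$ yields complete localization with high probability. Substituting the expression above and using $\tau=\Theta(r)$ — so $m\pi\tau^2$ and $m\pi r^2$ differ only by a constant factor that folds into $c$ — converts $q > c/\ln(\frac{\sqrt2}{r})$ into the stated condition with $r$ in place of $\tau$.

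The two points I would address rather than grind through are (i) replacing selection of a uniform $m$-subset of the Poisson points by independent $m/n$-thinning in the $n\to\infty$ limit, used both in evaluating $q$ and in claiming independence of the red events, and (ii) the passage from the $\tau$-form of $q$ to the $r$-form of the statement, legitimate precisely because $\tau=\Theta(r)$ and the conclusion is asymptotic, so multiplicative constants are immaterial. I expect (i) to be the main obstacle: finite-$m$-out-of-$n$ sampling introduces weak negative dependence between the ball counts, and one must argue this dependence neither distorts the Poissonization underlying the formula for $q$ nor the per-vertex independence required by the bootstrap-percolation threshold of Theorem~\ref{thm:bp}.
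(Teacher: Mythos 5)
Your proposal is correct and matches the paper's proof essentially step for step: the paper also computes $q$ by conditioning on the Poisson number of nodes in $\bB(g_{ij},\tau)$, multiplying by $\left(1-\frac{m}{n}\right)^k$ for the no-anchor event, summing the resulting series (your PGF computation is that same sum), and then invoking Theorem~\ref{thm:il} with $\tau=\Theta(r)$ so that the enhanced radio range $r+2\tau$ stays $\Theta(r)$. The only notable difference is that you are more careful on your point (ii) than the paper itself: the paper silently writes the Poisson mean as $n\pi r^2$ even though the ball has radius $\tau$, i.e., it performs the $\tau$-to-$r$ constant-absorption implicitly, whereas your argument makes that step (and the $m$-subset-versus-thinning approximation) explicit.
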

\begin{proof} For any vertex $g_{ij}$, given that $E_{g_{ij}}=1$, 
we compute the probability $1-q$ that the vertex is {\it blue} (no GPS node out of total $ \text {  \\ 
$m$ lies in  $\bB(g_{ij},\tau)$}$) as follows. By definition, $1-q$
\begin{eqnarray*}
&= & P(| A_0 \cap \bB(g_{ij},\tau)| =0   | E_{g_{ij}}=1), \\
&=&  \frac{P(| A_0 \cap \bB(g_{ij},\tau)| =0, E_{g_{ij}}=1)}{P(E_{g_{ij}}=1)},\\
&=&  \frac{P(| A_0 \cap \bB(g_{ij},\tau)| =0, |V\cap \bB(g_{ij},\tau)|\ge 1)}{P(|V\cap \bB(g_{ij},\tau)|\ge 1)},\\
&=&  \frac{\bbE_{k\ge 1}\left\{P(| A_0 \cap \bB(g_{ij},\tau)| =0, |V\cap \bB(g_{ij},\tau)|= k)\right\}}{P(|V\cap \bB(g_{ij},\tau)|\ge 1)},\\
&=& \frac{\sum_{k=1}^{\infty} \left(1-\frac{m}{n}\right)^k \frac{(n\pi r^2)^k}{k!} \exp^{-n\pi r^2}}{1-\exp^{-n\pi r^2}},\\
&=& \frac{\exp^{-n\pi r^2 \frac{m}{n}} - \exp^{-n\pi r^2}}{1-\exp^{-n\pi r^2}},\\
&=& \frac{\exp^{-m \pi r^2 } - \exp^{-n\pi r^2}}{1-\exp^{-n\pi r^2}},
\end{eqnarray*}
and hence 
$q = \left(1- \frac{\exp^{-m \pi r^2 } - \exp^{-n\pi r^2}}{1-\exp^{-n\pi r^2}}\right)$. Thus, from Thoerem \ref{thm:il} for  $\tau(n) =\Theta(r(n))$ if  $ \left(1- \frac{\exp^{-m \pi r^2 } - \exp^{-n\pi r^2}}{1-\exp^{-n\pi r^2}}\right)> \frac{c}{\ln(\frac{\sqrt{2}}{r})}$, then all $n$ nodes of $V$ 
can be localized with radio range $r+2\tau = \Theta(r)$.


\end{proof}

\begin{rem} Recently, bootstrap percolation on random geometric graphs has been studied in \cite{SanieeBP2012}, where $n$ nodes are distributed uniformly in a $1\times 1$ square, however, for a fixed radio range $r(n) =\Theta \left(\sqrt{\frac{\ln n}{n}}\right)$. The radio range $r(n) = \Theta \left(\sqrt{\frac{\ln n}{n}}\right)$ is chosen to be equal to the connectivity radio range that ensures that there is a path between any two nodes in the network where a link is present between two nodes if their centers are less than $r(n)$ distance away \cite{Gupta1998}. To start with, each node is made active with probability $p$ and inactive otherwise, independently of all other nodes. Under this model, a node becomes active if there are more than $\theta = \gamma a \ln n$ for $a>1, \gamma > 0$ active nodes inside its radio range $r(n)$. Bounds on the critical probability have been derived in \cite{SanieeBP2012}. Note that bootstrap percolation on random geometric graph is identical to iterative localization with $\theta =3$ in our model. The result of \cite{SanieeBP2012} is obtained using a different approach compared to this paper, and its not clear how they can be applied for fixed $\theta =3$. \end{rem}

\begin{exm}\label{ex:one} Let $r(n) = \Theta \left(\sqrt{ \frac{{c \ln(n)}}{n}}\right)$ be the connectivity radius. 
Then one can check that  $\lim_{n\rightarrow \infty }\left(1-e^{-n\pi \tau^2}\right)^{\frac{2}{r^2}} = 1$ for $\tau < \frac{r}{2\sqrt{2}}$ and large enough $c$, hence  satisfying the condition in Corollary \ref{cor:scaled}. Thus, using Corollary \ref{cor:scaled} we get the minimum $m$ that is sufficient to get the whole network localized eventually with high probability. 
We plot the scaling of $m$ with respect to $n$ for different values of $c$ in Fig. \ref{fig:conrad} which suggets that $m$ scales as $n^{2.5/3}$ for most values of $c$. 
\end{exm}


\section{Conclusions}
In this paper we demonstrated connections between iterated localization and bootstrap percolation. Bootstrap percolation has been a well studied topic, and strong results are known for critical thresholds for many models. We mapped the iterated localization problem to an instance of bootstrap percolation problem, and using known results on bootstrap percolation obtained sufficient conditions on the minimum number of anchor nodes to start with that guarantee that the whole network is localized with high probability. 
Iterated localization is relatively less explored area of research and major emphasis has been on non-iterated localization methods. The major advantage of iterated localization method is the significant reduction in the number of GPS nodes to be deployed so that each node of the network can get localized. 
There is, however, an accuracy cost associated with iterated methods because of error propagation. For example, if the radio ranges of nodes are not accurate, each node which localizes itself can make errors, and since all localized nodes are used as  anchor nodes in subsequent iterations, error propagation can lead to significant localization inaccuracies.

\begin{figure}
\centering
\includegraphics[width=3.5in]{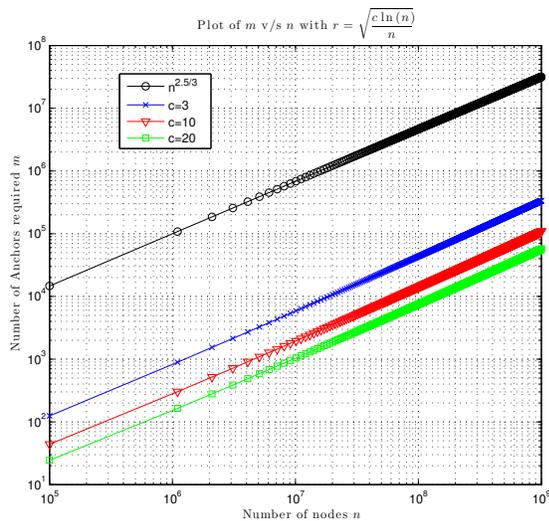}
\caption{Sufficient number of anchor nodes for localization as a function of total number of nodes $n$. }
\label{fig:conrad}
\end{figure}

\bibliographystyle{../../IEEEtran}
\bibliography{../../IEEEabrv,../../Research}

\end{document}